\documentclass[aps,reprint,superscriptaddress,nofootinbib,showkeys]{revtex4-2}

\usepackage[T1]{fontenc}
\usepackage[utf8]{inputenc}
\usepackage{lmodern}
\usepackage{amsmath,amssymb,mathtools}
\usepackage{amsthm}
\usepackage{graphicx}
\usepackage{booktabs,array}
\usepackage{dcolumn}
\usepackage{bm}
\usepackage{xcolor}
\usepackage{listings}
\usepackage{microtype}
\usepackage{url}
\usepackage[colorlinks=true,linkcolor=blue!50!black,citecolor=blue!50!black,urlcolor=blue!50!black]{hyperref}

\newtheorem{proposition}{Proposition}
\newtheorem{remark}{Remark}

\newcommand{\R}{\mathbb{R}}
\newcommand{\Cset}{\mathcal{C}}
\newcommand{\Hhist}{\mathcal{H}}
\newcommand{\Fset}{\mathcal{F}}
\newcommand{\med}{\operatorname{median}}
\newcommand{\MAD}{\operatorname{MAD}}
\newcommand{\RankScale}{\operatorname{RankScale}}
\newcommand{\clip}{\operatorname{clip}}
\newcommand{\cosw}{\operatorname{cos}_{w}}
\newcommand{\LoL}{\textit{League of Legends}}

\hypersetup{%
    pdftitle={Robust Player-Conditional Champion Ranking for League of Legends},
    pdfauthor={Min Heo, Pranav Kadiyam and Prasun Panthi}
}

\begin{document}

\title{Robust Player-Conditional Champion Ranking for \LoL{}:\\
Style Similarity, Mastery Priors, and Archetype-Constrained Discovery}

\author{Min Heo}
\affiliation{Wabash College, Crawfordsville, IN 47933, USA}

\author{Pranav Kadiyam}
\affiliation{Arizona State University, Tempe, AZ 85281, USA}

\author{Prasun Panthi}
\affiliation{Wabash College, Crawfordsville, IN 47933, USA}

\date{May 3, 2026}

\begin{abstract}
Champion recommendation in multiplayer online battle arena games is usually framed informally as a problem of metagame strength, personal comfort, or global win rate. We formalize champion recommendation in \LoL{} as an interpretable, player-conditional ranking problem under sparse, noisy, and non-stationary behavioral data. The proposed framework combines four information sources: a population-strength proxy, player-style similarity, direct and indirect mastery priors, and archetype-level guardrails. The method uses robust median/MAD normalization, logarithmic transforms for skewed event counts, recency-weighted player style vectors, mastery-weighted champion-pool vectors, weighted cosine similarity, rank-scaled score components, and k-means++ clustering for coarse archetype support. The implemented prototype uses a Python/Pandas modeling layer, Supabase-backed storage, and a web-facing recommendation interface. Unlike black-box supervised win-prediction systems, the proposed method returns decomposed recommendation scores that can be inspected as expected-performance proxy, fit, mastery, and archetype compatibility. A single-player case study on a 100-game history for the player identifier \texttt{DIVINERAINRACCON} is included as an end-to-end sanity check. The manuscript is therefore a methods and systems contribution: it specifies a reproducible, modular, and auditable champion recommender and gives a validation protocol for future large-scale evaluation through temporal train-test splits, next-champion recovery, calibration analysis, and ablation studies.
\end{abstract}

\keywords{recommender systems; esports analytics; content-based recommendation; robust statistics; player modeling; cosine similarity; champion mastery; explainable recommendation}

\maketitle

\section{Introduction}

Champion selection in \LoL{} is a recommender-systems problem disguised as a game decision. A player does not simply ask which champion is strong in the population. The more precise question is: which champion is strong enough, stylistically compatible enough, and familiar enough to be worth recommending to this specific player at this point in their history?

A naive recommender might rank champions by global win rate, popularity, or patch strength. This is insufficient for at least three reasons. First, match outcomes are highly confounded: teammates, opponents, draft order, lane assignments, patch changes, role swaps, side selection, and player execution all influence the result. Second, individual histories are sparse. A player may have hundreds of games overall but only a small number on any single champion. Third, champion suitability is not one-dimensional. A player with high farm, low death rate, and scaling-carry habits may not benefit from the same recommendations as a player whose strongest signals are roaming, vision, durability, or crowd control.

This paper introduces an interpretable multi-signal recommendation framework for champion selection. The method treats a champion as an item, the player as a user, and gameplay statistics as behavioral traces. However, instead of using opaque supervised learning, it builds a transparent scoring pipeline. Each candidate champion receives separate scores for population strength, style fit, mastery/familiarity, and archetype compatibility. The final recommendation is therefore auditable: the system can say whether a champion is recommended because it is statistically strong, because it resembles the player's recent behavior, because the player already has mastery on it, or because it is an adjacent discovery pick.

\subsection{Contributions}

The main contributions are as follows.

\begin{enumerate}
    \item We formulate champion recommendation as a content-based, player-conditional ranking problem with decomposable utility components.
    \item We define a robust preprocessing pipeline using median/MAD normalization, sign correction for negative metrics, clipping, and log transforms for skewed event counts.
    \item We construct player representations from both recency-weighted match behavior and mastery-weighted champion-pool behavior.
    \item We introduce a weighted cosine fit score whose feature weights depend on both population dispersion and player-specific behavioral salience.
    \item We define direct mastery, direct performance, and indirect familiarity scores, allowing the recommender to distinguish comfort picks from statistically adjacent discovery picks.
    \item We add an archetype guardrail based on k-means++ clustering over broader champion feature vectors, reducing the risk of recommending superficially similar but strategically distant champions.
    \item We describe a working implementation with a Python modeling stack, Supabase data layer, and web-facing interface, and I give a single-player end-to-end case study.
\end{enumerate}

\subsection{Scope of claims}

This is a methods and systems preprint. It does not claim that the recommender improves win rate in live play, nor does it claim superiority over baselines without large-scale validation. The case study in Section~\ref{sec:case-study} is an execution trace and qualitative sanity check, not a statistically powered experiment. The empirical contribution required for a stronger recommender-systems paper would be a temporal held-out evaluation over many players and patches, with baselines and ablations. Section~\ref{sec:validation} specifies such an evaluation protocol.

\section{Related Work}

Recommender systems combine item information, user behavior, and ranking objectives to help users choose among many possible items \cite{resnick1997recommender, ricci2015handbook, jannach2010recommender}. The proposed model is closest to content-based and hybrid recommendation, where item features and user profiles are explicitly represented \cite{lops2011content, burke2002hybrid}. Unlike collaborative-filtering systems based on many user-item ratings \cite{koren2009matrix, hu2008implicit}, champion selection lacks clean explicit ratings. Win/loss labels are also noisy implicit feedback because a single match is affected by many variables outside the target player's champion choice.

Learning-to-rank and implicit-feedback methods are natural future directions \cite{joachims2002optimizing, liu2009learning, rendle2009bpr}. In the present prototype, however, we deliberately avoid training a black-box ranker. The goal is interpretability and auditability rather than immediate leaderboard optimization. This design choice follows a common theme in explainable recommendation: a useful system should not only rank items but also provide reasons that are faithful to the scoring mechanism \cite{zhang2020explainable, ribeiro2016should}.

The statistical preprocessing is motivated by robust statistics. Game metrics are often heavy-tailed, sparse, and role-dependent. Median absolute deviation (MAD) is a standard robust scale estimator and is less sensitive to outliers than the sample standard deviation \cite{rousseeuw1993mad, hampel1986robust}. The archetype module uses k-means++ style seeding, which improves the reliability of k-means initialization \cite{arthur2007kmeanspp}; general clustering concerns and interpretation limits follow the broader clustering literature \cite{jain2010data}.

The engineering layer is built around Riot-related data access and public analytics workflows. Riot's developer ecosystem and Data Dragon resources make it possible to build applications around \LoL{} data \cite{riotdeveloper}. Supabase provides a Postgres-backed storage and Python client interface for tabular retrieval \cite{supabasepython}. OP.GG is used in this paper only as qualitative, player-facing context for screenshots of match history and mastery, not as a formal source of ground-truth model labels \cite{opgg}.

\section{Problem Formulation}

Let \(\Cset\) be a set of candidate champions and let \(\Hhist = (h_1,\ldots,h_T)\) be the observed history of a player. Each champion \(c\in\Cset\) is represented by a population feature vector
\[
    x_c=(x_{c1},\ldots,x_{cd})\in\R^d,
\]
and each player-history row \(h_i\) is represented by a feature vector
\[
    y_i=(y_{i1},\ldots,y_{id})\in\R^d.
\]
The recommender returns a ranked list
\[
    c_{(1)}, c_{(2)}, \ldots, c_{(K)}
\]
where \(c_{(1)}\) has the highest recommendation score.

The target score is not a calibrated probability of winning. Instead, we define a decomposed utility
\[
    R(c) = \Phi\big(W(c),F(c),M(c),G(c),T(c)\big),
\]
where \(W(c)\) is an expected-performance proxy, \(F(c)\) is style fit, \(M(c)\) is mastery/familiarity, \(G(c)\) is archetype compatibility, and \(T(c)\) is a support term. This distinction matters. In the current overall-mode implementation, the population table may not contain a reliable role-specific win-rate column. Therefore \(W(c)\) should be read as a performance proxy rather than as \(\Pr(\mathrm{win}\mid c,\mathrm{player})\).

\section{Feature Space and Data Model}

The implementation uses two overlapping feature sets. The primary recommendation feature set is shown in Table~\ref{tab:features}. The values are prototype weights, not universal constants.

\begin{table*}[t]
\caption{Overall recommendation features and prototype weights.}
\label{tab:features}
\begin{tabular*}{\textwidth}{@{\extracolsep{\fill}}lll}
\hline\hline
Feature & Weight & Interpretation \\
\hline
\texttt{damagePerMinute} & 0.20 & Damage output intensity.\\
\texttt{goldPerMinute} & 0.16 & Resource acquisition.\\
\texttt{cs\_per\_min} & 0.14 & Farming profile.\\
\texttt{laneMinionsFirst10Minutes} & 0.10 & Early-lane farming.\\
\texttt{deaths\_per\_min} & 0.18 & Survival discipline; sign reversed.\\
\texttt{killParticipation} & 0.10 & Skirmish and teamfight involvement.\\
\texttt{damageDealtToBuildings} & 0.06 & Structure pressure.\\
\texttt{damageDealtToObjectives} & 0.03 & Neutral-objective pressure.\\
\texttt{visionScorePerMinute} & 0.02 & Vision contribution.\\
\texttt{totalTimeCCDealt} & 0.01 & Crowd-control contribution.\\
\hline\hline
\end{tabular*}
\end{table*}

For archetype modeling, the feature set is expanded to include durability and utility variables. In implementation notation, the archetype feature set is
\[
\begin{aligned}
\Fset_{\mathrm{arch}}=\Fset_S\cup\{&\texttt{totalDamageTaken},\\
&\texttt{damageSelfMitigated}\}.
\end{aligned}
\]
Equivalently, it consists of damage, economy, farming, early-lane farming, death control, kill participation, durability, mitigation, vision, structure pressure, objective pressure, and crowd-control features. The broader set prevents archetype labels from collapsing into damage and economy alone.

\section{Robust Preprocessing}

\subsection{Skewed counts and sign conventions}

Sparse events such as objective steals, baron takedowns, dragon takedowns, rift herald takedowns, turret plates, and turret takedowns are log-transformed:
\[
    x \mapsto \log(1+x), \qquad x\geq 0.
\]
This preserves ordering while limiting the leverage of rare event outliers. Directionally negative metrics are sign-corrected. For example, after normalization, \(\texttt{deaths\_per\_min}\) is multiplied by \(-1\), so fewer deaths contribute positively.

\subsection{Median/MAD normalization}

For a feature vector \(v=(v_1,\ldots,v_n)\), define
\[
    m=\med(v), \qquad \MAD(v)=\med_i |v_i-m|.
\]
The robust score is
\[
    z_i = 0.67448975 \frac{v_i-m}{\MAD(v)}.
\]
If the MAD is zero or undefined, the transformed score is set to zero. Scores are then clipped:
\[
    z_i \leftarrow \clip(z_i,-3,3).
\]
The constant \(0.67448975\) makes the MAD estimator comparable to standard deviation under a Gaussian reference model. The purpose is not to assume normality, but to use a familiar scale while maintaining robustness.

\begin{remark}
Ordinary mean-standard-deviation scaling is fragile in this setting because game metrics can be sparse, heavy-tailed, role-dependent, and patch-sensitive. Robust scaling is a modeling choice.
\end{remark}

\section{Population Strength}

Let \(w_j^{(S)}\) be the feature weight for strength and let \(\widetilde z_{cj}\) be the sign-adjusted robust z-score for champion \(c\). The raw population-strength proxy is
\[
    S_{\mathrm{raw}}(c)=\sum_{j\in \Fset_S} w_j^{(S)} \widetilde z_{cj}.
\]
This raw score is converted to a percentile-like rank score:
\[
    S(c)=\RankScale(S_{\mathrm{raw}}(c)).
\]
If all values are tied or unavailable, the implementation assigns a neutral value of \(0.5\).

A role-aware prototype also exists. When reliable role labels are available, feature weights can differ for top, jungle, middle, bottom, and utility. For champion \(c\) in role \(r\), with \(n_{cr}\) games, average row score \(\bar s_{cr}\), observed win rate \(\hat p_{cr}\), role baseline \(\bar p_r\), shrinkage constant \(K\), and row-score standard deviation \(\sigma_{cr}\), one can define
\[
    \mathrm{skill}_{cr}=\frac{n_{cr}}{n_{cr}+K}\bar s_{cr},
\]
\[
    \mathrm{edge}_{cr}=\frac{n_{cr}}{n_{cr}+K}(\hat p_{cr}-\bar p_r),
\]
and
\[
    \mathrm{penalty}_{cr}=\lambda\frac{\sigma_{cr}}{\sqrt{n_{cr}}}.
\]
The role-aware score is
\[
    \mathrm{score}_{cr}=\mathrm{skill}_{cr}+\beta\,\mathrm{edge}_{cr}-\mathrm{penalty}_{cr}.
\]
This version is statistically preferable when role labels and win-rate columns are reliable. The overall-mode recommender described below is more conservative when those columns are incomplete.

\section{Player Style Representation}

Let the preprocessed player history contain \(T\) rows. The model constructs two player vectors: a recent-game vector and a champion-pool vector.

\subsection{Recent-game vector}

For decay parameter \(\rho>0\), define recency weights
\[
    a_i=\frac{\exp(-\rho(T-i))}{\sum_{k=1}^T \exp(-\rho(T-k))}.
\]
The recent-game vector is
\[
    u_{\mathrm{game}}=\sum_{i=1}^T a_i y_i.
\]
In the prototype, \(\rho=0.18\). Recent games matter more, but older games do not vanish entirely.

\subsection{Champion-pool vector}

Let \(\Cset_{\mathrm{played}}\subseteq \Cset\) be the player's observed champions. For each played champion \(c\), let \(\bar y_c\) be the mean player-history vector on that champion, \(P_c\) be champion mastery points, \(L_c\) be champion level, and \(G_c\) be observed game count. A direct mastery raw score is
\[
    M_{d,\mathrm{raw}}(c)
    =0.60\log(1+P_c)+0.20L_c+0.20\log(1+G_c).
\]
After rank scaling and lower clipping of the weights, the champion-pool vector is
\[
    u_{\mathrm{pool}}=\frac{\sum_{c\in\Cset_{\mathrm{played}}} \omega_c\bar y_c}
    {\sum_{c\in\Cset_{\mathrm{played}}}\omega_c}.
\]
The recent-game vector captures current form; the pool vector captures longer-run identity.

\section{Style Fit by Weighted Cosine Similarity}

For vectors \(a,b\in\R^d\) and positive feature weights \(w=(w_1,\ldots,w_d)\), define
\[
    \cosw(a,b)=
    \frac{\sum_{j=1}^d w_j a_j b_j}
    {\sqrt{\sum_{j=1}^d w_j a_j^2}\sqrt{\sum_{j=1}^d w_j b_j^2}}.
\]
Let \(\sigma_j\) be the population standard deviation of normalized champion feature \(j\). Let \(u_{\mathrm{game},j}\) be the player's recent-game coordinate. The unnormalized feature attention is
\[
    q_j=\sigma_j(1+\alpha |u_{\mathrm{game},j}|),
\]
and
\[
    w_j=\frac{q_j}{\sum_{k=1}^d q_k}.
\]
The hyperparameter \(\alpha\) controls how strongly the model emphasizes features where the player deviates from average. In the prototype, \(\alpha=0.75\).

For candidate champion \(c\), raw fit is
\[
    F_{\mathrm{raw}}(c)
    =0.55\cosw(u_{\mathrm{game}},x_c)+0.45\cosw(u_{\mathrm{pool}},x_c).
\]
The final fit score is
\[
    F(c)=\RankScale(F_{\mathrm{raw}}(c)).
\]

\begin{proposition}[Scale invariance of weighted cosine]
If \(\lambda,\mu>0\), then \(\cosw(\lambda a,\mu b)=\cosw(a,b)\), provided the denominators are nonzero.
\end{proposition}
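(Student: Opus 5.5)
The plan is to substitute the scaled vectors directly into the definition of \(\cosw\) and use the homogeneity of each piece. The numerator \(\sum_j w_j a_j b_j\) is bilinear in \((a,b)\), so replacing \(a\) by \(\lambda a\) and \(b\) by \(\mu b\) multiplies it by \(\lambda\mu\). Each factor in the denominator is a weighted Euclidean norm \(\|a\|_w=\sqrt{\sum_j w_j a_j^2}\), which is well defined because the weights \(w_j\) are positive. This norm is absolutely homogeneous: \(\|\lambda a\|_w=\sqrt{\lambda^2}\,\|a\|_w=|\lambda|\,\|a\|_w\).

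The key steps, in order, are:

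\begin{enumerate}
  \item Write the numerator as \(\sum_j w_j(\lambda a_j)(\mu b_j)=\lambda\mu\sum_j w_j a_j b_j\).
  \item Write the denominator as \(\|\lambda a\|_w\,\|\mu b\|_w=|\lambda||\mu|\,\|a\|_w\|b\|_w\).
  \item Use the hypothesis \(\lambda,\mu>0\) to replace \(|\lambda||\mu|\) by \(\lambda\mu\).
  \item Note that the nonvanishing of the denominators for \((a,b)\) and for \((\lambda a,\mu b)\) are equivalent, since the two differ by the positive factor \(\lambda\mu\). So the proviso in the statement covers both sides at once.
  \item Cancel \(\lambda\mu\) to obtain \(\cosw(\lambda a,\mu b)=\cosw(a,b)\).
\end{enumerate}

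The only point needing care is the sign. If negative scalars were allowed, the same calculation would give \(\operatorname{sgn}(\lambda\mu)\cosw(a,b)\). The positivity assumption is therefore essential and is used exactly once, when \(\sqrt{\lambda^2}=\lambda\) and \(\sqrt{\mu^2}=\mu\).

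I would close with a remark on what this means for the fit score in Section~7. The quantity \(F_{\mathrm{raw}}(c)\) is unaffected by any positive rescaling of \(u_{\mathrm{game}}\), \(u_{\mathrm{pool}}\), or \(x_c\). However, the attention weights \(q_j\) depend on \(|u_{\mathrm{game},j}|\), so the weights themselves are not scale invariant. The proposition holds for a fixed weight vector \(w\) only.
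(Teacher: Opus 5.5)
Your proof is correct and is the same as the paper's: the numerator picks up \(\lambda\mu\), each weighted norm picks up \(\lambda\) and \(\mu\) respectively (using positivity), and the factors cancel. One small caution on your closing remark: because \(q_j=\sigma_j(1+\alpha|u_{\mathrm{game},j}|)\) is not homogeneous in \(u_{\mathrm{game}}\), \(F_{\mathrm{raw}}(c)\) generally changes when \(u_{\mathrm{game}}\) is rescaled and the weights are recomputed, so your first sentence there is true only under the fixed-\(w\) caveat you state afterward.
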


\begin{proof}
The numerator is multiplied by \(\lambda\mu\). Each norm factor is multiplied by \(\lambda\) and \(\mu\), respectively. The multiplicative factors cancel.
\end{proof}

\begin{remark}
The fit model compares behavioral direction, not absolute magnitude. This is appropriate after robust normalization because the question is whether the player's statistical identity resembles the champion's statistical identity.
\end{remark}

\section{Mastery, Performance, and Indirect Familiarity}

\subsection{Direct signals}

For each champion \(c\) in the player's history, define
\[
\begin{aligned}
G_c &= \text{observed player games on } c,\\
P_c &= \text{mastery points},\\
L_c &= \text{mastery level}.
\end{aligned}
\]
Let \(\bar R_c\) be the mean player row score and \(\bar W_c\) be the mean observed win indicator. Let \(A_c\) be the total recency mass assigned to games on \(c\). The direct mastery raw score is
\[
\begin{aligned}
M_{d,\mathrm{raw}}(c)=&\ 0.55\log(1+P_c)+0.15L_c\\
&+0.20\log(1+G_c)+0.10A_c.
\end{aligned}
\]
The direct performance raw score is
\[
    P_{d,\mathrm{raw}}(c)=0.65\bar R_c+0.35(\bar W_c-0.5).
\]
Both are rank-scaled:
\[
\begin{aligned}
M_d(c)&=\RankScale(M_{d,\mathrm{raw}}(c)),\\
P_d(c)&=\RankScale(P_{d,\mathrm{raw}}(c)).
\end{aligned}
\]

\subsection{Indirect familiarity}

Let \(\mathcal M\) be the set of champions with positive direct mastery. For candidate \(c\) and mastered champion \(m\in\mathcal M\), define
\[
    s(c,m)=\max\{0,\cos(x_c,x_m)\}.
\]
Let \(\eta_m\) be the rank-scaled direct mastery weight of \(m\). The indirect familiarity raw score is the mean of the top \(K\) weighted similarities:
\[
    M_{i,\mathrm{raw}}(c)=
    \operatorname{mean}\left(\operatorname{TopK}_{m\in\mathcal M}\{\eta_m s(c,m)\}\right).
\]
After rank scaling,
\[
    M_i(c)=\RankScale(M_{i,\mathrm{raw}}(c)).
\]
The combined mastery/familiarity score is
\[
    M(c)=0.70M_d(c)+0.30M_i(c).
\]
This allows the recommender to make two different kinds of recommendations: comfort recommendations for champions the player has already used, and discovery recommendations for champions near the player's mastered pool.

\section{Archetype Guardrail}

Feature similarity can be misleading. Two champions may both have high damage and low deaths while differing in strategic role, teamfight function, map responsibility, and learning cost. The archetype guardrail reduces this risk.

Let \(X_{\mathrm{arch}}\) be the matrix of champion vectors over \(\Fset_{\mathrm{arch}}\). The implementation clusters champions using k-means with k-means++ initialization and multiple restarts, keeping the assignment with lowest inertia:
\[
    \sum_{c\in\Cset}\|x_c-\mu_{a(c)}\|_2^2,
\]
where \(a(c)\) is the cluster assignment.

Cluster centroids are labeled using deterministic style dimensions: damage, farm, tankiness, utility, siege pressure, and teamfight involvement. The labels are interpretation aids rather than ontological claims. Candidate labels include frontline tank, artillery control, scaling carry, siege splitpush, utility support, and skirmish bruiser.

Let \(B(c)\) be the rank-scaled support score for the archetype containing \(c\). The archetype guardrail is
\[
    G(c)=0.55B(c)+0.45\max\{M_d(c),M_i(c)\}.
\]
This permits a champion to be protected either by archetype support or by direct/indirect familiarity.

\section{Final Scoring Function}

Let \(G_c\) be the number of player games observed on champion \(c\). Define player-confidence weight
\[
    \gamma(c)=\frac{G_c}{G_c+3}.
\]
For low-data champions, define a fallback utility
\[
    U(c)=0.55S(c)+0.45F(c).
\]
The expected-performance proxy is
\[
    W(c)=\gamma(c)P_d(c)+(1-\gamma(c))U(c).
\]
The base score is
\[
    Q(c)=\lambda_W W(c)+\lambda_F F(c)+\lambda_M M(c),
\]
with default weights
\[
    (\lambda_W,\lambda_F,\lambda_M)=(0.50,0.25,0.25).
\]
The support score is
\[
    T(c)=0.60F(c)+0.40\max\{M_d(c),M_i(c)\}.
\]
The support multiplier is
\[
    H(c)=0.82+0.18T(c).
\]
The archetype multiplier is
\[
    A(c)=
    \begin{cases}
    0.90+0.10G(c), & G_c>0,\\
    0.72+0.28G(c), & G_c=0.
    \end{cases}
\]
The final recommendation score is
\[
    R(c)=Q(c)H(c)A(c).
\]

\begin{proposition}[Boundedness]
Assume \(S,F,M,M_d,M_i,P_d,G,T\in[0,1]\) and \(\lambda_W,\lambda_F,\lambda_M\geq0\) with \(\lambda_W+\lambda_F+\lambda_M=1\). Then \(R(c)\in[0,1]\).
\end{proposition}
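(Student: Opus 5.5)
The proof goes factor by factor: I will show \(Q(c)\in[0,1]\), \(H(c)\in[0,1]\) and \(A(c)\in[0,1]\), and then use the fact that a product of three numbers in \([0,1]\) stays in \([0,1]\). Each bound is a convex-combination or affine-map argument built only on the hypotheses \(S,F,M,M_d,M_i,P_d,G,T\in[0,1]\).

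\emph{The factor \(Q\).} First, \(\gamma(c)=G_c/(G_c+3)\) lies in \([0,1)\), since \(G_c\ge 0\) is a game count. The fallback \(U(c)=0.55S(c)+0.45F(c)\) is a convex combination of \(S,F\in[0,1]\), so \(U(c)\in[0,1]\). Then \(W(c)=\gamma P_d+(1-\gamma)U\) is a convex combination of \(P_d,U\in[0,1]\), so \(W(c)\in[0,1]\). Finally \(Q=\lambda_W W+\lambda_F F+\lambda_M M\) has nonnegative weights summing to one, so \(Q(c)\) is a convex combination of elements of \([0,1]\) and lies in \([0,1]\).

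\emph{The factors \(H\) and \(A\).} Since \(T\in[0,1]\), the affine map \(t\mapsto 0.82+0.18t\) is increasing and sends \([0,1]\) onto \([0.82,1]\); hence \(H(c)\in[0.82,1]\subseteq[0,1]\). If one prefers not to take \(T\in[0,1]\) as a hypothesis, it follows anyway: \(T=0.60F+0.40\max\{M_d,M_i\}\) is a convex combination, and the maximum of two numbers in \([0,1]\) is in \([0,1]\). For \(A\), I check both branches. When \(G_c>0\), \(A=0.90+0.10G\in[0.90,1]\). When \(G_c=0\), \(A=0.72+0.28G\in[0.72,1]\). In each branch the two coefficients sum to one, which is exactly why the upper bound equals \(1\).

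\emph{Conclusion and the one point to watch.} \(R=QHA\) is a product of nonnegative numbers, so \(R\ge 0\). Since each factor is at most \(1\), \(R\le 1\). The only place needing care is notation: \(G\) denotes both the archetype guardrail \(G(c)\) and the game count \(G_c\). I will say explicitly that the hypothesis \(G\in[0,1]\) refers to \(G(c)\), while \(G_c\ge0\) only selects the branch of \(A\) and enters \(\gamma\). I would also note that the sharper bound \(R(c)\ge 0.82\cdot 0.72\,Q(c)\) falls out of the same argument; it is not needed here.
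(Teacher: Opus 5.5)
Your proof is correct and follows the paper's argument essentially step for step. You bound \(W\) and then \(Q\) by convex combinations, observe \(H(c)\in[0.82,1]\) and \(A(c)\in[0.90,1]\) or \([0.72,1]\) depending on the branch, and multiply; your extra details (deriving \(\gamma(c)\in[0,1)\) from \(G_c\ge 0\), showing \(U(c)\in[0,1]\), and separating the guardrail \(G(c)\) from the game count \(G_c\)) just make explicit what the paper's proof leaves implicit.
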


\begin{proof}
Since \(U(c)\), \(P_d(c)\), and \(\gamma(c)\) lie in \([0,1]\), \(W(c)\in[0,1]\). Hence \(Q(c)\) is a convex combination of values in \([0,1]\), so \(Q(c)\in[0,1]\). Also \(H(c)=0.82+0.18T(c)\in[0.82,1]\). If \(G_c>0\), then \(A(c)\in[0.90,1]\); if \(G_c=0\), then \(A(c)\in[0.72,1]\). Therefore \(R(c)=Q(c)H(c)A(c)\in[0,1]\).
\end{proof}

\begin{remark}
Boundedness is used to ensure score comparability and numerical stability. 
\end{remark}

\section{Algorithmic Summary}

\begin{enumerate}
    \item Load player-history and population tables.
    \item Validate required columns such as \texttt{championName}, \texttt{championPoints}, and \texttt{championLevel}.
    \item Coerce expected features to numeric values.
    \item Apply \(\log(1+x)\) to skewed event counts.
    \item Compute robust z-scores using population distributions.
    \item Reverse sign for negative metrics.
    \item Compute population strength and rank-scale it.
    \item Construct \(u_{\mathrm{game}}\) and \(u_{\mathrm{pool}}\).
    \item Compute weighted cosine fit for each candidate champion.
    \item Aggregate direct mastery and direct performance by champion.
    \item Compute indirect familiarity from similarity to mastered champions.
    \item Cluster candidates into archetypes and compute archetype support.
    \item Compute \(W(c)\), \(F(c)\), \(M(c)\), \(G(c)\), \(T(c)\), and \(R(c)\).
    \item Return the top-\(K\) recommendations with decomposed explanation fields.
\end{enumerate}

\section{Implementation Architecture}

The prototype is implemented in Python using NumPy and Pandas. The key implementation modules are:

\begin{itemize}
    \item \texttt{league\_stats.py}: role-aware champion scoring, robust preprocessing, shrinkage, stability penalties, and champion aggregation.
    \item \texttt{player\_recommender.py}: overall-mode recommender with player vectors, weighted cosine fit, direct/indirect mastery, archetypes, and final scoring.
    \item \texttt{supabase\_only\_recommender.py}: Supabase-native loading and recommendation execution.
    \item \texttt{test\_supabase.py}: environment loading and Supabase connectivity checks.
\end{itemize}

The intended backend interface is an HTTP endpoint such as:
\begin{lstlisting}
POST /recommend
{
  "gameName": "PLAYER_NAME",
  "tagLine": "NA1",
  "topN": 30
}
\end{lstlisting}
The frontend displays ranked champion cards with component scores, qualitative explanations, and filtering controls. Fig.~\ref{fig:frontend} shows the prototype interface.

\begin{figure*}[t]
    \centering
    \includegraphics[width=0.95\textwidth]{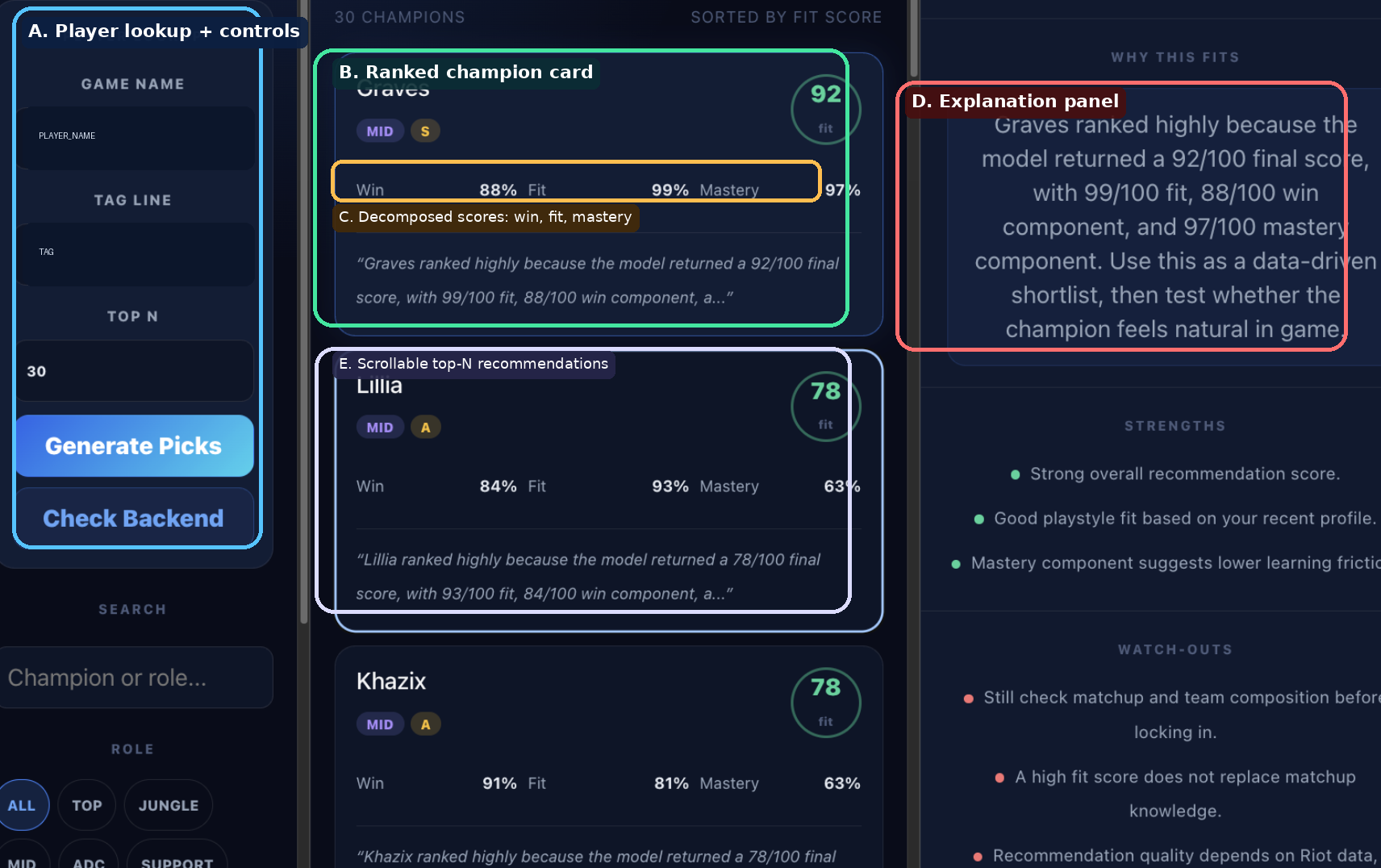}
    \caption{Prototype frontend interface. The interface exposes player lookup controls, a scrollable top-\(N\) recommendation list, decomposed win/fit/mastery components, and a natural-language explanation panel. This screenshot is a UI smoke test rather than a formal evaluation result.}
    \label{fig:frontend}
\end{figure*}

\section{Case Study: End-to-End Run on \texttt{DIVINERAINRACCON}}
\label{sec:case-study}

This section reports a single-player execution trace. It is included to show that the system can run end-to-end on a nontrivial player history and to illustrate how the decomposed scores should be interpreted. It is not a statistically powered experiment.

The recommender was run on an anonymized development dataset associated with the player identifier \texttt{DIVINERAINRACCON}. The run contained \(100\) player games and \(51\) unique champions. The role mix in the recorded history was Middle: 42, Top: 19, Utility: 15, Bottom: 14, and Jungle: 10.
The default weights were
\[
    (\lambda_W,\lambda_F,\lambda_M)=(0.50,0.25,0.25),
\]
with \(\alpha=0.75\). The three highest-supported archetypes were scaling carry, frontline tank, and utility support. Because this run used the overall-mode population table, the \texttt{win\_score} column is an expected-performance proxy rather than a calibrated win probability.

\begin{table*}[t]
\centering
\caption{Top ten recommendations for the \texttt{DIVINERAINRACCON} case study. Scores are rounded to three decimals.}
\label{tab:case-study}
\resizebox{\textwidth}{!}{
\begin{tabular}{llrrrrrr}
\toprule
Champion & Type & Final & Win proxy & Fit & Mastery & Guardrail & Similarity \\
\midrule
Malphite & comfort\_or\_known & 0.804 & 0.741 & 0.994 & 0.874 & 0.620 & 0.749\\
Heimerdinger & comfort\_or\_known & 0.791 & 0.851 & 0.842 & 0.722 & 0.947 & 0.437\\
Xerath & comfort\_or\_known & 0.785 & 0.844 & 0.737 & 0.808 & 0.997 & 0.366\\
Ornn & comfort\_or\_known & 0.751 & 0.721 & 1.000 & 0.742 & 0.554 & 0.826\\
Anivia & comfort\_or\_known & 0.750 & 0.699 & 0.936 & 0.882 & 0.426 & 0.587\\
Mordekaiser & comfort\_or\_known & 0.744 & 0.644 & 0.871 & 0.899 & 0.890 & 0.509\\
Annie & comfort\_or\_known & 0.742 & 0.712 & 0.901 & 0.882 & 0.411 & 0.535\\
Hwei & comfort\_or\_known & 0.736 & 0.676 & 0.889 & 0.900 & 0.521 & 0.525\\
Nasus & comfort\_or\_known & 0.727 & 0.686 & 0.947 & 0.768 & 0.570 & 0.613\\
Cho'Gath & comfort\_or\_known & 0.723 & 0.667 & 0.895 & 0.843 & 0.589 & 0.534\\
\bottomrule
\end{tabular}}
\end{table*}

The top recommendation, Malphite, demonstrates why a decomposed score is preferable to a single opaque output. Malphite is not recommended because of population strength. Its case-study fit score is \(0.994\), mastery score is \(0.874\), and final score is \(0.804\). The model therefore interprets Malphite as a high-fit comfort recommendation, not just as a generic metagame pick. Ornn provides a complementary example: it achieves a perfect fit score of \(1.000\), but its final score remains below Malphite's because the final rank also depends on expected-performance proxy, mastery, support, and archetype effects.

The mage and control recommendations are also coherent. Heimerdinger, Xerath, Anivia, Annie, and Hwei appear in the top ten, matching a history with many middle-lane games and multiple control/damage champions. The presence of frontline picks such as Malphite, Ornn, Nasus, Cho'Gath, and Mordekaiser  follows from the full 100-game history and the model's archetype support. The OP.GG screenshots in Figures~\ref{fig:opgg-history} and~\ref{fig:opgg-mastery} should be read as qualitative context showing recent match patterns and mastery information, not as the formal source of the scoring table.

\begin{figure*}[t]
    \centering
    \includegraphics[height=0.76\textheight]{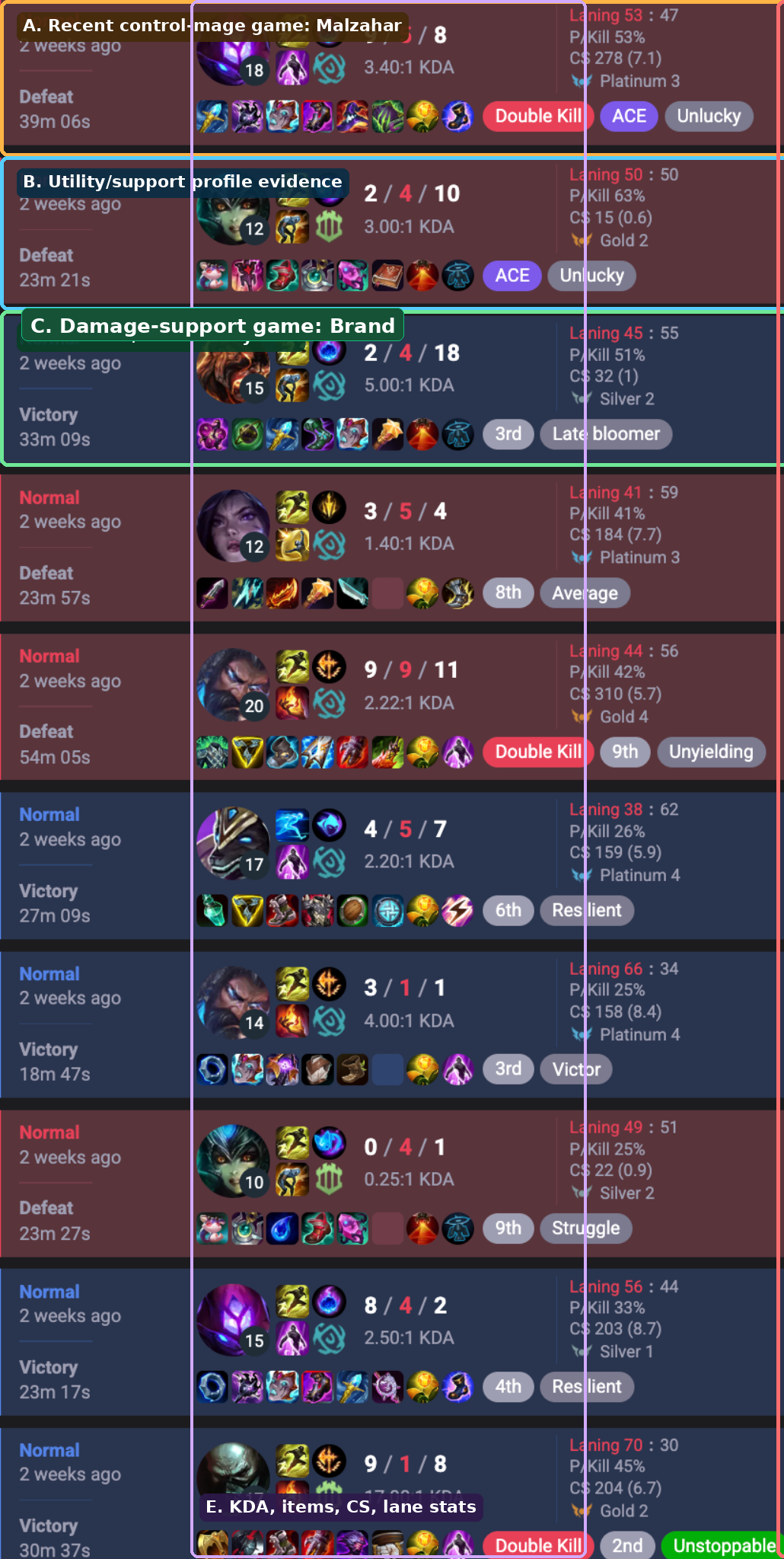}
    \caption{Qualitative OP.GG match-history context for the case study. The image has been cropped to reduce exposure of other player names. The annotations highlight recent mage/control play, utility-support evidence, and a damage-support Brand row. This figure is not used as a formal numerical input; it contextualizes why mage/control and support-adjacent recommendations are plausible. Source: OP.GG player-facing match-history interface \cite{opgg}.}
    \label{fig:opgg-history}
\end{figure*}

\begin{figure*}[t]
    \centering
    \includegraphics[width=0.55\textwidth]{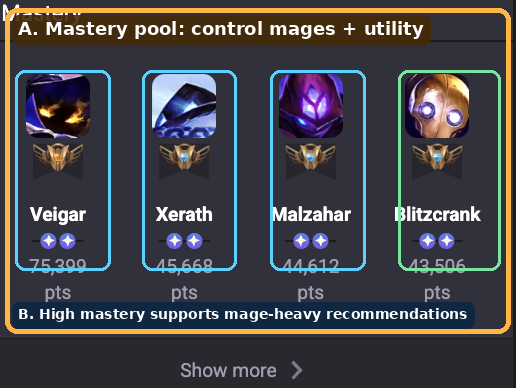}
    \caption{Qualitative OP.GG champion mastery context. Veigar and Xerath mastery help explain why control-mage recommendations such as Xerath, Anivia, Annie, Hwei, and Heimerdinger are plausible. Source: OP.GG player-facing mastery interface \cite{opgg}.}
    \label{fig:opgg-mastery}
\end{figure*}

\subsection{Score audit for Malphite}

For Malphite in Table~\ref{tab:case-study}, the reported components are approximately
\[
    W=0.741,\qquad F=0.994,\qquad M=0.874.
\]
The base score is therefore
\[
    Q = 0.50(0.741)+0.25(0.994)+0.25(0.874)\approx 0.8375.
\]
The final score is lower than \(Q\) because the support and archetype multipliers are conservative. This is desirable: the system does not simply select the largest fit score. It requires a candidate to be supported by multiple dimensions. The numerical audit also makes clear that the recommendation is interpretable: changing the mastery weight, the archetype penalty, or the fit-sensitivity parameter would change the ranking in inspectable ways.

\section{Validation Protocol}
\label{sec:validation}

A public empirical version of this work should evaluate the recommender using temporal held-out data. The following protocol is intended to prevent look-ahead bias and to distinguish preference recovery from performance prediction.

\subsection{Temporal next-champion recovery}

For each player, sort matches by time. At time \(t\), train or compute the recommender using only matches \(1,\ldots,t\), then rank all candidate champions. If the player actually chooses \(c_{t+1}\), compute:
\[
    \mathrm{Hit@K} = \frac{1}{N}\sum_{t=1}^N \mathbf{1}\{c_{t+1}\in L_t^{(K)}\},
\]
and
\[
    \mathrm{MRR} = \frac{1}{N}\sum_{t=1}^N \frac{1}{\operatorname{rank}_t(c_{t+1})}.
\]
This evaluates whether the recommender captures player tendency and champion-pool structure. It does not by itself prove that the recommendations improve win rate.

\subsection{Outcome calibration}

For each historical match, record the model's score \(R(c_t)\) for the champion actually selected. A simple calibration model is
\[
    \Pr(\mathrm{win}_t=1)=\sigma(a+bR(c_t)),
\]
where \(\sigma\) is the logistic function. A positive \(b\) would suggest that the score contains outcome-related information, but causal interpretation would still require controls for draft, matchup, role, side, teammates, opponents, and patch.

\subsection{Baselines}

The minimum baseline set should include:
\begin{enumerate}
    \item most-played champion;
    \item highest-mastery champion;
    \item most-recently played champion;
    \item population-strength-only ranking;
    \item ordinary cosine similarity without mastery or archetype guardrail;
    \item random champion within role or within the candidate pool.
\end{enumerate}

\subsection{Ablations}

The following ablations should be reported:
\begin{enumerate}
    \item remove population strength;
    \item remove fit;
    \item remove direct mastery;
    \item remove indirect familiarity;
    \item remove archetype guardrail;
    \item replace robust z-scores with ordinary z-scores;
    \item replace recency weighting with uniform averaging;
    \item replace weighted cosine with ordinary cosine.
\end{enumerate}
A serious predictive claim requires the full model to improve ranking quality or calibration relative to these alternatives.

\section{Data, Privacy, and Reproducibility}

The prototype uses two table types: a player-history table and a population champion table. The implementation supports Supabase loading with environment variables for URL, key, player table, population table, and optional player filters. Missing credentials and empty tables raise explicit errors.

For a public release, raw match-history data should be anonymized or aggregated. Screenshots that include other players' names should be cropped, redacted, or replaced by synthetic diagrams. The figures in this paper are included as qualitative development documentation; any public repository should include a data card specifying source, collection date, patch range, queue type, filtering criteria, and privacy treatment.

A reproducible release should include:
\begin{enumerate}
    \item the recommendation source code;
    \item a small anonymized sample dataset;
    \item instructions for constructing the population table;
    \item a script that reproduces Table~\ref{tab:case-study};
    \item an evaluation script for Hit@K, MRR, calibration, baselines, and ablations.
\end{enumerate}

\section{Limitations}

\subsection{The win score is not a true win probability}

In the overall-mode prototype, the population table may not contain role-specific win-rate data. Therefore the \texttt{win\_score} should be interpreted as an expected-performance proxy, not a calibrated probability of winning.

\subsection{Patch dependence}

Champion strength and playstyle distributions change across patches. A public evaluation should either restrict to a patch window or include patch-aware time weighting.

\subsection{Role ambiguity}

Role-aware weighting is statistically preferable when role labels are reliable. The overall-mode recommender is robust to incomplete data, but it loses role specificity.

\subsection{Outcome confounding}

Win/loss outcomes are influenced by teammates, enemies, draft order, side, role, matchup, player state, and execution. Direct win aggregation is therefore noisy and should not be interpreted causally.

\subsection{Human preference versus optimization}

Players may want different objectives: climbing, comfort, learning, role coverage, meta conformity, or enjoyment. The weights \((\lambda_W,\lambda_F,\lambda_M)\) should eventually be user-controllable.

\section{Future Work}

A natural extension is Bayesian champion strength estimation. If champion \(c\) has \(w_c\) wins in \(n_c\) games, a beta-binomial model gives
\[
    p_c\sim \operatorname{Beta}(\alpha_0,\beta_0),\qquad
    w_c\mid p_c\sim \operatorname{Binomial}(n_c,p_c),
\]
with posterior mean
\[
    \mathbb{E}[p_c\mid w_c,n_c]=
    \frac{\alpha_0+w_c}{\alpha_0+\beta_0+n_c}.
\]
This would provide principled shrinkage for population win strength.

Another extension is draft expected value. Let \(D\) be a partial draft state including ally picks, enemy picks, bans, side, and role assignment. Then the target becomes
\[
    \arg\max_{c\in\Cset(D)} \mathbb{E}[\mathrm{utility}\mid c,D,\mathrm{player}].
\]
This is harder because it requires modeling champion interactions, matchup effects, role constraints, and team composition.

Finally, once larger datasets are collected, the transparent score can be compared against learning-to-rank methods or used as a feature in a supervised ranker. The goal should not be to abandon interpretability, but to test whether empirical optimization improves ranking while preserving faithful explanations.

\section{Conclusion}

This paper presents an interpretable multi-signal champion recommender for \LoL{}. The key idea is that recommendation quality should not be reduced to global win rate or player mastery alone. A useful recommendation arises from the interaction of population strength, player style, direct experience, indirect familiarity, and archetype compatibility.

The proposed method is mathematically explicit and implementation-oriented. It uses robust normalization, recency-weighted player vectors, weighted cosine similarity, direct and indirect mastery, archetype clustering, and conservative multipliers. It also produces decomposed outputs suitable for explanation in a frontend interface.

The current manuscript should be read as a formal methods preprint and prototype report. It includes a single-player execution trace, but it does not claim validated population-level predictive superiority. The next stage is a temporal held-out evaluation with baselines, ablations, calibration analysis, and patch-aware robustness checks.

\appendix

\section{Backend Output Fields}

The backend returns a list of candidate champions together with a metadata block. The most important recommendation fields are \texttt{championName}, \texttt{recommendation\_type}, \texttt{archetype\_name}, \texttt{final\_score}, \texttt{win\_score}, \texttt{fit\_score}, \texttt{mastery\_score}, \texttt{archetype\_guardrail}, \texttt{population\_strength\_score}, \texttt{direct\_mastery\_score}, \texttt{indirect\_mastery\_score}, \texttt{player\_games}, and \texttt{similarity\_raw}. The public-facing interface should expose only a subset of these values; the remaining fields are primarily diagnostic.

\section{Role-Aware Feature Emphasis}

The role-aware prototype uses different feature priorities across positions. Top-lane scoring emphasizes damage, farming, death control, structure pressure, and durability. Jungle scoring emphasizes kill participation, objective control, vision, and death control. Middle-lane scoring emphasizes damage, gold, farming, lane pressure, and kill participation. Bottom-lane scoring emphasizes damage, gold, farming, lane minions, and death control. Utility scoring emphasizes vision, crowd control, kill participation, and objective support. The overall-mode recommender used in the case study is a fallback mixture over damage, economy, farming, vision, objectives, and structure pressure.

\section{Notation Index}

\begin{center}
\begin{tabular}{ll}
\hline\hline
Symbol & Meaning\\
\hline
$\Cset$ & Candidate champions.\\
$\Hhist$ & Player match history.\\
$x_c$ & Population vector for champion $c$.\\
$y_i$ & Player-history vector for row $i$.\\
$S(c)$ & Population-strength proxy.\\
$F(c)$ & Style-fit score.\\
$M(c)$ & Mastery/familiarity score.\\
$W(c)$ & Expected-performance proxy.\\
$G(c)$ & Archetype guardrail score.\\
$R(c)$ & Final recommendation score.\\
\hline\hline
\end{tabular}
\end{center}

\end{document}